%% This is file `elsarticle-template-1-num.tex',
%%
%% Copyright 2009 Elsevier Ltd
%%
%% This file is part of the 'Elsarticle Bundle'.
%% ---------------------------------------------
%%
%% It may be distributed under the conditions of the LaTeX Project Public
%% License, either version 1.2 of this license or (at your option) any
%% later version.  The latest version of this license is in
%%    http://www.latex-project.org/lppl.txt
%% and version 1.2 or later is part of all distributions of LaTeX
%% version 1999/12/01 or later.
%%
%% The list of all files belonging to the 'Elsarticle Bundle' is
%% given in the file `manifest.txt'.
%%
%% Template article for Elsevier's document class `elsarticle'
%% with numbered style bibliographic references
%%
%% $Id: elsarticle-template-1-num.tex 149 2009-10-08 05:01:15Z rishi $
%% $URL: http://lenova.river-valley.com/svn/elsbst/trunk/elsarticle-template-1-num.tex $
%%
\documentclass[preprint,12pt]{elsarticle1}

%% Use the option review to obtain double line spacing
%% \documentclass[preprint,review,12pt]{elsarticle}

%% Use the options 1p,twocolumn; 3p; 3p,twocolumn; 5p; or 5p,twocolumn
%% for a journal layout:
%% \documentclass[final,1p,times]{elsarticle}
%% \documentclass[final,1p,times,twocolumn]{elsarticle}
%% \documentclass[final,3p,times]{elsarticle}
%% \documentclass[final,3p,times,twocolumn]{elsarticle}
%% \documentclass[final,5p,times]{elsarticle}
%% \documentclass[final,5p,times,twocolumn]{elsarticle}

%% if you use PostScript figures in your article
%% use the graphics package for simple commands
%% \usepackage{graphics}
%% or use the graphicx package for more complicated commands
%% \usepackage{graphicx}
%% or use the epsfig package if you prefer to use the old commands
%% \usepackage{epsfig}
%% The amssymb package provides various useful mathematical symbols
\addtolength{\topmargin}{-9mm}
\setlength{\oddsidemargin}{5mm}  % 3.17cm - 1 inch
\setlength{\evensidemargin}{0mm}
\setlength{\textwidth}{15cm}
\setlength{\textheight}{21cm}
\usepackage{subfig}
\usepackage{graphicx}
\usepackage{caption,color}   % 24.62
\usepackage{amssymb}
\usepackage{amsthm}
\usepackage{amsmath}
\usepackage{epic}
\usepackage{setspace}
\usepackage{float}
\usepackage{multirow}

\usepackage{hyperref}
\usepackage{xcolor}
\usepackage{marginnote}
\usepackage{booktabs}
\usepackage{algorithm, algorithmicx, algpseudocode}

\newtheorem{thm}{Theorem}[section]

\newtheorem{defi}[thm]{Definition}

\numberwithin{equation}{section}
\journal{}

\begin{document}
\begin{spacing}{1.15}
\begin{frontmatter}

\title{An $\alpha$-triangle eigenvector centrality of graphs}
\author[label1]{Qingying Zhang }
\author[label1]{Lizhu Sun\corref{cor}} \ead{lizhusun@hrbeu.edu.cn}
\author[label1]{Changjiang Bu}
%\author[label1]{College of Mathematical Sciences, Harbin Engineering University, Harbin 150001, PR China}
%\author[label1]{College of Mathematical Sciences, Harbin Engineering University, Harbin 150001, PR China}
%\author[label1,label2]{2222}
\cortext[cor]{Corresponding author.}

\address{
\address[label1]{School of Mathematical Sciences, Harbin Engineering University, Harbin 150001, PR China}
}

%%% use optional labels to link authors explicitly to addresses:
%\author{Qingying Zhang}
%\author{Lizhu Sun}\ead{lizhusun@hrbeu.edu.cn}
%\author{Changjiang Bu}
%
%\address{~\\College of Mathematical Sciences, Harbin Engineering University, Harbin 150001, PR China\\
%}

\begin{abstract}
Centrality represents a fundamental research field in complex network analysis, where centrality measures identify important vertices within networks. Over the years, researchers have developed diverse centrality measures from varied perspectives.
This paper proposes an $\alpha$-triangle eigenvector centrality ($\alpha$TEC), which is a global centrality measure based on both edge and triangle structures.
It can dynamically adjust the influence of edges and triangles through a parameter $\alpha$ ($\alpha \in (0,1]$). The centrality scores for vertices are defined as the eigenvector corresponding to the spectral radius of a nonnegative tensor. By the Perron-Frobenius theorem, $\alpha$TEC guarantees unique positive centrality scores for all vertices in connected graphs.
Numerical experiments on synthetic and real world networks demonstrate that $\alpha$TEC effectively identifies the vertex's structural positioning within graphs. As $\alpha$ increases (decreases), the centrality rankings reflect a stronger (weaker) contribution from edge structure and a weaker (stronger) contribution from triangle structure. Furthermore, we experimentally prove that vertices with higher $\alpha$TEC rankings have a greater impact on network connectivity.
\end{abstract}
\begin{keyword}
centrality, tensor, eigenvector, triangle, connectivity \\
\emph{AMS classification(2020): \emph{05C50}, \emph{05C82}, \emph{15A69}}
\end{keyword}
\end{frontmatter}

\section{Introduction}

Centrality measures aim to identify the most important vertices in networks.
They play significant roles across multiple domains, including social network analysis \cite{landherr2010critical,das2018study}, brain network \cite{joyce2010new,lohmann2010eigenvector}, and network clustering \cite{berahmand2018new}.
Some centrality measures such as degree centrality \cite{albert2000error},
%k-core centrality \cite{lin2014identifying}
and triangle centrality \cite{burkhardt2024triangle} are local centrality measures, they only consider the local structure.
Eigenvector centrality \cite{bonacich1972factoring}, subgraph centrality \cite{estrada2005subgraph,zhou2023estrada}, two-steps eigenvector centrality \cite{xu2023two}, communicability centrality \cite{de2022communication,el2023perron}, betweenness centrality \cite{barthelemy2004betweenness}, and All-Subgraphs centrality \cite{bugedo2024family} are global centrality measures, they consider the entire network structure.

Different centrality measures focus on distinct types of topological information within networks, and lead to significant discrepancies in the critical vertices identified.
Triangles in complex networks embody close interactions among vertices and hold significant applications in domains such as link prediction \cite{chavan2020higher}, community detection \cite{gao2022graph}, network stability analysis \cite{hu2023triangular}.
There are many articles on network centrality and clustering centered by triangles. \cite{durak2012degree,ma2019local,zhang2020community,arrigo2020framework,burkhardt2024triangle}.
This paper proposes a method to calculate centrality scores for vertices based on triangles.
However, if we only consider triangles may lead to substantial information loss, particularly in failing to differentiate vertices that do not participate in any triangles.

To address this limitation, we propose an $\alpha$-triangle eigenvector centrality ($\alpha$TEC), which is a global centrality measure. This measure integrates both edges and triangles influences on a vertex's importance, with parameter $\alpha$ ($\alpha \in (0,1]$) dynamically adjusting the relative weights of edges and triangles.
First, we construct a tensor based on the edges and triangles in which each vertex participates. Subsequently, we employ the positive eigenvector corresponding to the tensor's spectral radius as the centrality score for vertices.
Specifically, when $\alpha = 1$, centrality is entirely influenced by the edges, and in connected graphs each vertex is assigned a nonzero centrality score. Conversely, when $\alpha = 0$, this centrality totally governed by triangles, and vertices not participating in any triangles are assigned zero scores.
%But some vertices that are not in triangles may also be important.
Therefore, in order to assign nonzero centrality scores to all vertices, we restrict $\alpha \in (0,1]$. And we investigate how vertices centrality changes as $\alpha$ varies in subsequent experiment.

Moreover, we conducte experimental analyses in real networks and compare the $\alpha$TEC with degree centrality, triangle centrality, betweenness centrality and subgraph centrality. Through these analyses, we find that $\alpha$TEC can reflect the position of vertices within the graph. When the $\alpha$ value is large, vertices with high rankings are generally located in areas with dense edge structures, and the vertices in their edge neighborhoods also tend to have relatively higher scores. Conversely, when the $\alpha$ value is small, vertices with high rankings are typically found in areas with dense triangular structures, and the vertices in their triangle neighborhoods are also in regions with concentrated triangle distributions. At this point, vertices that are not in triangles but are adjacent to vertices in triangles will have higher scores than those that are neither in triangles nor adjacent to vertices in triangles.

\section{An $\alpha$-triangle eigenvector centrality}
The theoretical part of this paper is based on tensors, the following introduces some fundamental knowledge about tensors.
Let $\mathbb{C}^n$ and $\mathbb{C}^{[k,n]}$ denote the sets of $n$-dimensional vectors and $k$-th order $n$-dimensional tensors over the complex number field $\mathbb{C}$, respectively.
Let $[n]=\{1,2,\cdots,n\}$.
A tensor $\mathcal{A}=(a_{{i_1} {i_2} {\cdots}  {i_k}}) \in \mathbb{C}^{[k,n]} $ is a multi-dimensional array containing $n^k$ elements, where $i_j \in [n], j \in [k]$.
A tensor is nonnegative if  all its elements are nonnegative.
For $\boldsymbol {x}=(x_1,x_2,{\cdots},x_n)^{\mathrm{T}} \in \mathbb{C}^n$,
$\mathcal{A}{\boldsymbol {x}}^{k-1}$ is a vector in $\mathbb{C}^n$,
with the $i$-th component
\begin{align*}
(\mathcal{A}{\boldsymbol {x}}^{k-1})_i=\sum_{{i_2}, {\cdots},  {i_k}=1}^{n}  a_{{i} {i_2} {\cdots} {i_k}} {x_{i_2}}{x_{i_3}}{\cdots}{x_{i_k}}, i \in [n].
\end{align*}
If there exists  $\lambda \in \mathbb{C}$ and a nonzero vector $\boldsymbol {x}$ such that
\begin{align} \label{tzfc}
\mathcal{A}{\boldsymbol {x}}^{k-1}=\lambda \boldsymbol {x}^{[k-1]},
\end{align}
then $\lambda$  is called an eigenvalue of $\mathcal{A}$,
and $\boldsymbol {x}$ is an eigenvector of  $\mathcal{A}$ corresponding to $\lambda$ \cite{qi2005eigenvalues,lim2005singular},
where $\boldsymbol {x}^{[k-1]}=(x_1^{k-1}, {\cdots}, x_n^{k-1})^{\mathrm{T}}$.
The maximum modulus among all eigenvalues of $\mathcal{A}$ is called the spectral radius of $\mathcal{A}$, denoted by $\rho(\mathcal{A})$.

Extensive studies has been conducted by researchers on the spectral problems of tensors \cite{chang2008perron,qi2017tensor,cooper2012spectra,liu2023generalization,chen2024spectra}.
The eigenvector corresponding to $\rho(\mathcal{A})$ has significant applications in various fields \cite{benson2019three,liu2023high,xu2023two,tudisco2021node}.
In this work, we use it as centrality scores for vertices.
Before formally defining centrality, we first introduce several prerequisite definitions.

Let $V_\triangle$ be a set of vertex sets, where each vertex set contains three mutually connected vertices that form a triangle.
For a connected graph $G=(V(G),E(G))$ of $n$ vertices, where $V(G)$ and $E(G)$ represent the set of vertices and the set of edges, respectively,
we define an edge tensor $\mathcal{A}_E=(b_{ijk})$ and a triangle tensor $\mathcal{A}_\triangle=(c_{ijk})$ based on the edge neighborhoods and triangle neighborhoods of vertices in $V(G)$, where $i,j,k \in [n]$,
\begin{align*}
b_{i j k }=
\begin{cases}
1,                                &  if \text{ $ \{i,j\}\in E(G), k = j  $},\\
0 ,                                &  otherwise,
\end{cases}
\end{align*}
and
\begin{align*}
c_{i j k }=
\begin{cases}
\frac{1}{2}  ,&  if \text{ $ \{i, j, k  \} \in V_\triangle  $,}\\
0 ,                                &  otherwise.
\end{cases}
\end{align*}

\begin{defi} \label{tztzl}
For a connected graph $G$,
tensor $\mathcal{A}= \alpha \mathcal{A}_E + (1-\alpha) \mathcal{A}_\triangle$ is called the $\alpha$-triangle tensor of $G$,
where $0 < \alpha \leq 1$.
%\begin{align} \label{7676}
%a_{i j k }=
%\begin{cases}
%\alpha,                                &  if \text{ $ \{i,j\}\in E(G), k = j  $},\\
%\frac{(1-\alpha)}{2}  ,&  if \text{ $ \{i, j, k  \} \in V_\triangle  $,}\\
%0 ,                                &  otherwise.
%\end{cases}
%\end{align}
\end{defi}

%Then we have
%\begin{align} \label{0101}
%\mathcal{A} \boldsymbol {x}^{2} = \rho(\mathcal{A}) {\boldsymbol {x}}^{[2]} ,
%\end{align}
%where $\rho(\mathcal{A})$ is the radius of $\mathcal{A}$ and $\boldsymbol {x} = (x_1, \cdots, x_n  )^\mathrm{T}$.

For a nonnegative tensor $\mathcal{A}=(a_{{i_1} {i_2} {\cdots}  {i_k}})$, ${i_j} \in [n]$, $j \in [k]$,
let $D_{\mathcal{A}}=(V{(D_{\mathcal{A}})},E{(D_{\mathcal{A}})})$ be the associated directed graph of $\mathcal{A}$,
with the vertex set $V{(D_{\mathcal{A}})}=[n]$, and the arc set
$E{(D_{\mathcal{A}})}=\left\{ {(i,j)| a_{i {i_2}{\cdots}{i_k} } \neq 0 ,  j \in {\left\{{{i_2},{\cdots},{i_k}}\right\}}  }\right\}$.
For any distinct $i,j \in {V(D_{\mathcal A})}$,
if there exists a directed path from  $i$ to $j$ and $j$ to $i$,
then $D_{\mathcal{A}}$  is said to be strongly connected.
The nonnegative tensor $\mathcal{A}$ is weakly irreducible if and only if $D_{\mathcal{A}}$ is strongly connected \cite{friedland2013perron}.

\begin{thm} \label{irreducible}
For a connected graph $G$, the $\alpha$-triangle tensor $\mathcal{A}$ of $G$ is weakly irreducible
($\alpha \in (0,1]$).
\end{thm}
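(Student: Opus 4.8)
The plan is to prove the statement by showing that the associated directed graph $D_{\mathcal A}$ of the $\alpha$-triangle tensor is strongly connected; the cited equivalence \cite{friedland2013perron} then immediately gives that $\mathcal A$ is weakly irreducible. The guiding idea is that since $\alpha\in(0,1]$ we have $\alpha>0$, so the edge tensor $\mathcal A_E$ enters $\mathcal A$ with a strictly positive weight and therefore the entire edge structure of $G$ survives in $D_{\mathcal A}$; no cancellation can occur because $\mathcal A_\triangle$ is nonnegative.

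First I would make the following entrywise observation. Fix an edge $\{i,j\}\in E(G)$. Looking at the index pattern $i_2=i_3=j$ in the order-$3$ tensor $\mathcal A$, we have $a_{ijj}=\alpha b_{ijj}+(1-\alpha)c_{ijj}$; here $b_{ijj}=1$ because $\{i,j\}\in E(G)$, while $c_{ijj}\ge 0$. Hence $a_{ijj}\ge\alpha>0$, so $a_{ijj}\neq 0$, and by the definition of the arc set $E(D_{\mathcal A})$ (with $j\in\{i_2,i_3\}$) the arc $(i,j)$ lies in $D_{\mathcal A}$. Applying the same argument to the edge $\{j,i\}$ shows that $(j,i)$ is an arc of $D_{\mathcal A}$ as well; thus each edge of $G$ contributes a pair of mutually reverse arcs to $D_{\mathcal A}$.

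Next I would lift connectivity of $G$ to strong connectivity of $D_{\mathcal A}$. Given distinct $u,v\in[n]$, connectivity of $G$ provides a path $u=w_0,w_1,\dots,w_\ell=v$ in $G$; by the previous step every consecutive pair $\{w_t,w_{t+1}\}$ yields both arcs $(w_t,w_{t+1})$ and $(w_{t+1},w_t)$ in $D_{\mathcal A}$, so this path lifts to a directed path from $u$ to $v$ and its reversal lifts to a directed path from $v$ to $u$. Since $u,v$ were arbitrary, $D_{\mathcal A}$ is strongly connected, and therefore $\mathcal A$ is weakly irreducible. (When $\alpha<1$ the triangle tensor may add further arcs to $D_{\mathcal A}$, but these play no role in the proof; the trivial case $n=1$ is vacuously strongly connected.)

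I do not expect a real obstacle: the argument is essentially a verification. The one point that needs care is to invoke the arc-set definition with the correct repeated-index pattern, so that a single nonzero entry $a_{ijj}$ of the order-$3$ tensor genuinely certifies the arc $(i,j)$, and to state explicitly that $\alpha>0$ (not merely $\alpha\ge 0$) is what guarantees that the edge skeleton is not wiped out — this is exactly why the theorem requires $\alpha\in(0,1]$ rather than $\alpha\in[0,1]$.
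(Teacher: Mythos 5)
Your proposal is correct and follows essentially the same route as the paper's proof: both verify that each edge $\{i,j\}\in E(G)$ forces $a_{ijj}\neq 0$ (hence the arc $(i,j)$ in $D_{\mathcal A}$), lift connectivity of $G$ to strong connectivity of $D_{\mathcal A}$, and invoke the equivalence with weak irreducibility. Your write-up is merely more explicit about why $\alpha>0$ prevents the edge entries from vanishing.
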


\begin{proof}
%Let $D_{\mathcal{A}_{K_2, F}}$ be the associated directed graph of $\mathcal{A}_{K_2, F}$.
%Because $G$ is connected, there exists a path $P_s = i_1 e_1 i_2 e_2 \cdots  i_s e_s i_{s+1}$ between any two distinct vertices $i, j \in V(G) $,
%where $i=i_1$ and $j=i_{s+1}$.
%Obviously, in $\mathcal{A}_{K_2, F}$ we have $a_{i_t i_{t+1} i_t  \cdots i_t } \neq 0$, for $t \in [s]$.
%Then, in $D_{\mathcal{A}_{K_2, F}}$ there exists a directed arc from $i_t$ to $i_{t+1}$, which means that there is a directed path from $i$ to $j$.
%So $D_{\mathcal{A}_{K_2, F}}$ is strongly connected.
%By Lemma \ref{perron}, we can know that $\mathcal{A}_{K_2, F}$  is weakly irreducible.
Let $G=(V(G),E(G))$ be a graph with $n$ vertices.
And let $D_{\mathcal{A}}$ be the associated directed graph of $\mathcal{A}$.
For $i_1,i_2 \in [n]$,
if $\{i_1,i_2\}\in E(G)$,
we have $a_{i_1i_2i_2}\neq 0$ in $\mathcal{A}$.
Thus, in $D_{\mathcal{A}}$ there exists a directed arc from $i_1$ to $i_2$.
Because $G$ is connected,
then for any $i,j\in [n]$ there is a directed path from $i$ to $j$ in $D_{\mathcal{A}}$.
So $D_{\mathcal{A}}$ is strongly connected,
we can know that $\mathcal{A}$ is weakly irreducible.
\end{proof}

From the Perron-Frobenius theorem for nonnegative weakly irreducible tensors \cite{friedland2013perron},
we have $\boldsymbol {x}$ is the unique positive eigenvector corresponding to $\rho (\mathcal{A})$ (up to its scalar multiples).
Since Theorem \ref{irreducible}, the $\alpha$-triangle tensor of connected graph $G$ is weakly irreducible,
and the positive eigenvector associated with $\rho(\mathcal{A})$ is uniquely determined.
%This implies that the $\alpha$TEC we propose is mathematically guaranteed to exist and be unique.

\begin{defi}
For a connected graph $G$,
% with $n$ vertices,
let $\mathcal{A}$ be the $\alpha$-triangle tensor of $G$, and $\rho(\mathcal{A})$ is the radius of $\mathcal{A}$.
Then the eigenvector $\boldsymbol {x}$ 
% $\boldsymbol {x} = (x_1, \cdots, x_n  )^\mathrm{T}$
 corresponding to $\rho(\mathcal{A})$ is called the
$\alpha$-triangle eigenvector centrality of $G$.
\end{defi}

Then for a vertex $i \in V(G)$, $x_i > 0 $ is the $\alpha$-triangle eigenvector centrality ($\alpha$TEC) score of $i$.
And from Definition \ref{tztzl} and Equ. (\ref{tzfc}), we have
\begin{align} \label{88}
\rho(\mathcal{A}) x_i^{2} =
\alpha \sum_{\{i,j\}\in E(G)}  x_j^{2}+
(1-\alpha) \sum_{\{ i,j,k \} \in V_\triangle }  x_{j} x_{k}.
\end{align}
As evident from Equ. (\ref{88}), the $\alpha$TEC of vertex $i$ is jointly influenced by vertices in both its edge neighborhood and triangle neighborhood,
with the relative weights of these influences determined by $\alpha$.
Under some centrality measures, the centrality score of a vertex is $0$,
while $\alpha$TEC assigns nonzero centrality scores to all vertices.

\section{Some numerical examples}
Here we introduce several classical centrality measures. In the following experiments, we compare the rankings of vertices under these centralities with the rankings under the $\alpha$-triangle eigenvector centrality ($\alpha$TEC).
Degree centrality (DC) quantifies a vertex's significance through the number of edges that contain it \cite{albert2000error}.
The triangle centrality (TC) \cite{burkhardt2024triangle} of a vertex is assigned by calculating the number of triangles in which the vertex and its adjacent vertices are in.
Betweenness centrality (BC) measures a vertex's importance by counting the shortest paths passing through it \cite{freeman1991centrality}.
Subgraph centrality (SC) evaluates vertex influence using closed walks of fixed length $k$ originating from the vertex \cite{estrada2005subgraph}.
Eigenvector centrality (EC) assigns centrality scores to vertex based on other vertices that share an edge with it \cite{bonacich1972factoring}.
Note that, under $\alpha$TEC, the vertices ranking is same as EC when $\alpha =1$.
%Moreover, in the experimental analysis, we normalized all the centrality measures except for DC, which can intuitively reflect the number of other vertices adjacent to a vertex.
\subsection{An simple example}
To further demonstrate that the $\alpha$TEC ranking dynamics of vertices as the $\alpha$ value varies reflect their structural positioning within the graph, we conduct a detailed analysis in an simple example.
Figure \ref{fig222} depicts a graph and its $\alpha$TEC scores under different $\alpha$ values.
Additionally, we list the centrality scores for some centrality measures and $\alpha$TEC under different $\alpha$ values in Table \ref{table111}.

As can be seen from Table \ref{table111}, under other centrality measures, either vertex $4$ or $8$ consistently ranks first, and some vertices have centrality scores of $0$.
Moreover, the vertex rankings under DC, SC, and BC fail to capture the influence of triangles in the graph. Under TC, vertex $4$ (adjacent to two triangles) ranks first, vertices within a triangle are all ranked second, and the remaining vertices score zero. This approach loses some important information and fails to distinguish the detailed positions of vertices in the graph.

\begin{figure}[H]
\centering
\subfloat[]{
\includegraphics[width=0.37\textwidth]{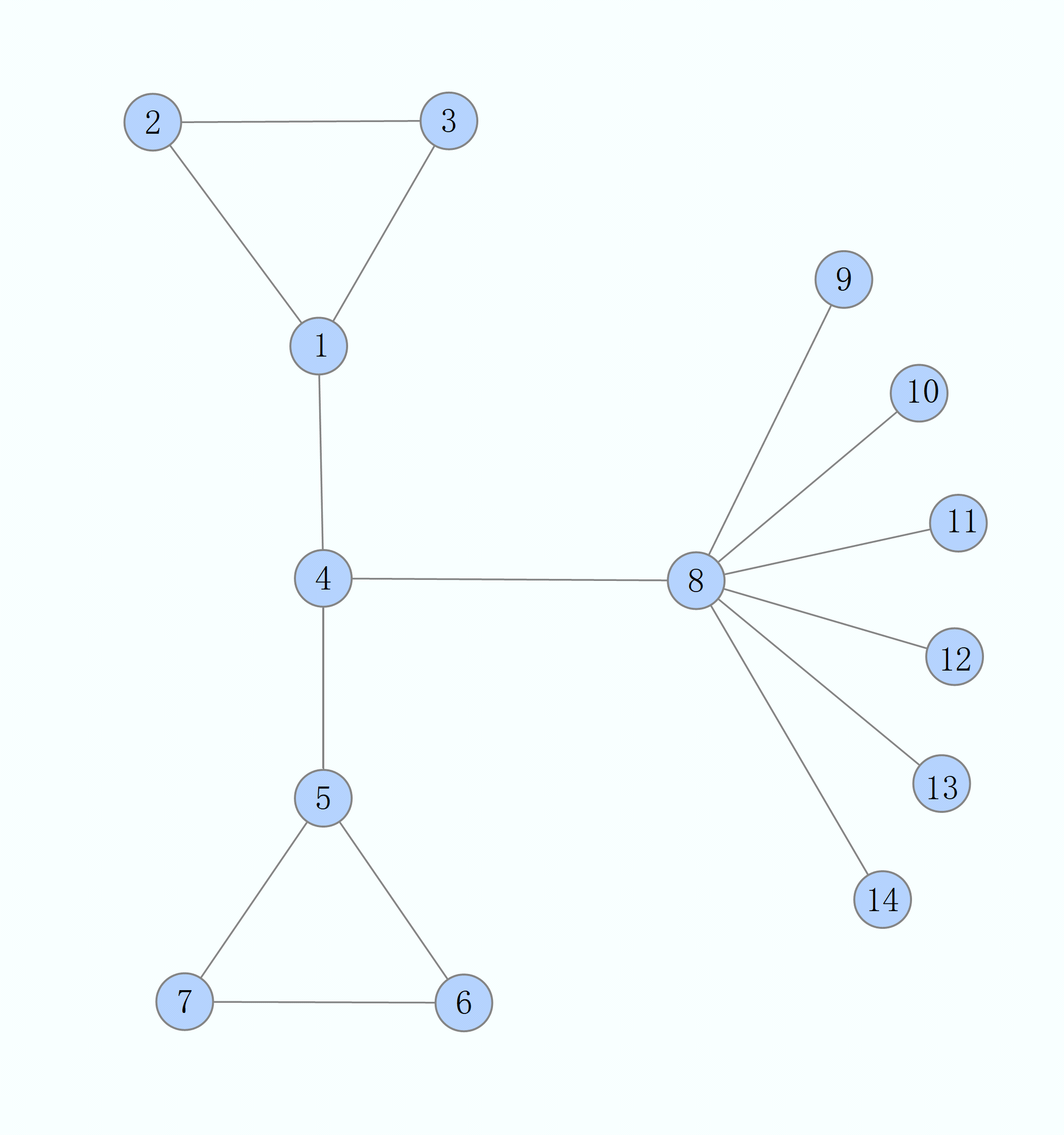}
}
\subfloat[]{
\includegraphics[width=0.48\textwidth]{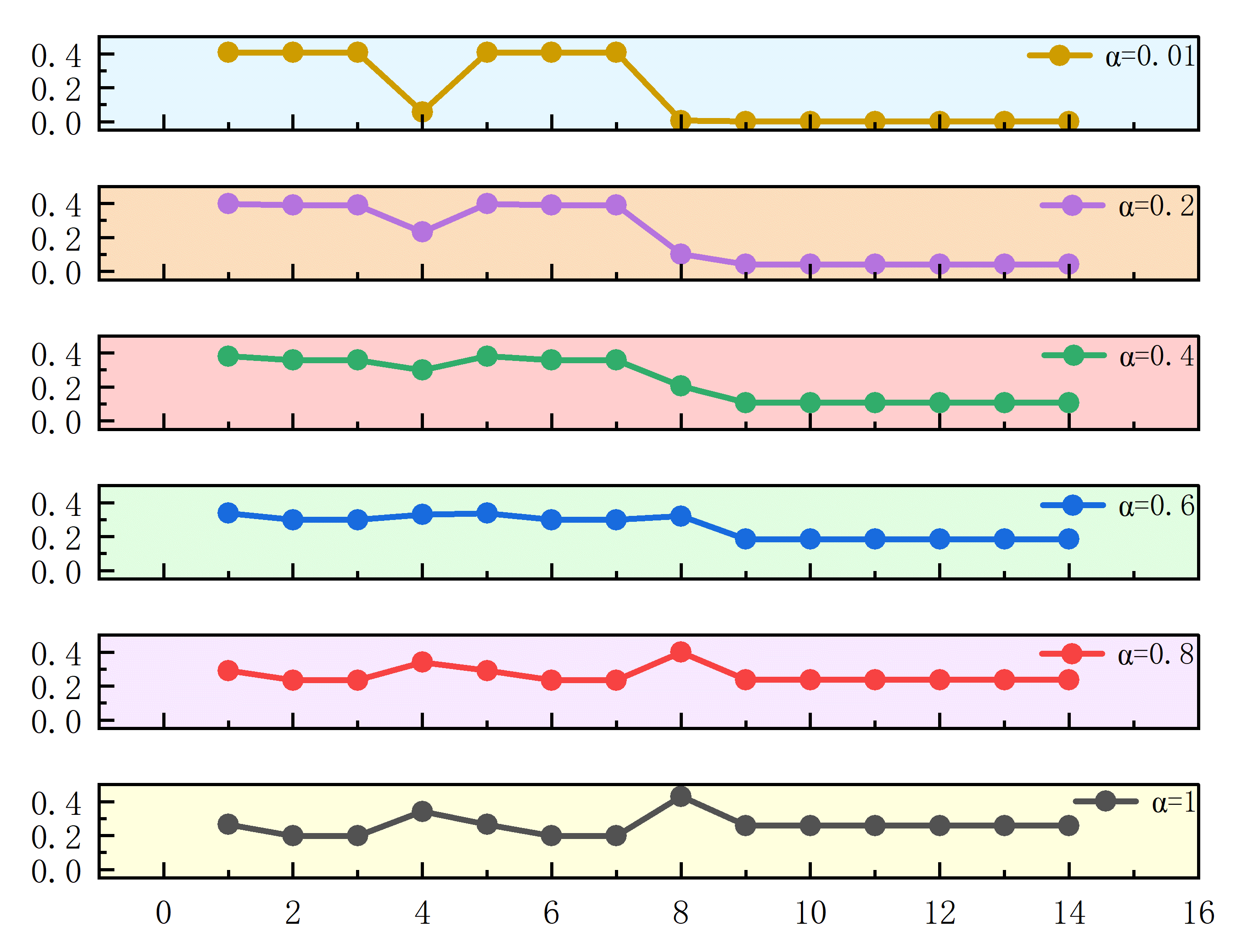}
}
\caption{(a) is a graph with $14$ vertices and (b) is vertices's $\alpha$TEC scores distribution under different values of $\alpha$.}
\label{fig222}
\end{figure}

\begin{table*}
    \tiny
    \centering
    \caption{The centrality scores under some centrality measures and under different $\alpha$ values of $\alpha$TEC.}
    \label{table111}
    \begin{tabular}{lccccc}  % 调整为 6 列
    \toprule
        \textbf{Measures} & \textbf{1,5} & \textbf{2,3,6,7} & \textbf{4} & \textbf{8} & \textbf{9,10,11,12,13,14} \\ \midrule
        DC            & 3 & 2 & 3 & \textbf{7} & 1 \\
        TC           & 0.4082 & 0.4082 & \textbf{0.8165} & 0& 0 \\
        SC            % & 3.6173 & 2.7876 & 3.5494 & \textbf{7.2294} & 1.8727\\
        & 0.3023 & 0.2330 & 0.2907  & \textbf{0.6042} & 0.1565\\
        BC           % & 22 & 0 & 51 & \textbf{57} & 0 \\
        & 0.2664 & 0 & 0.6176 & \textbf{0.6903} & 0 \\
        $\alpha = 1$      & 0.2651 & 0.1984 & 0.3421 & \textbf{0.4306} & 0.2580 \\
        $\alpha = 0.8$      & 0.2905 & 0.2348 & 0.3398 & \textbf{0.3988} & 0.2367 \\
        $\alpha = 0.6$      & \textbf{0.3379} & 0.2984 & 0.3303 & 0.3209 & 0.1842 \\
        $\alpha = 0.4$      & \textbf{0.3801} & 0.3578 & 0.2984 & 0.2054 & 0.1064 \\
        $\alpha = 0.2$      & \textbf{0.3977} & 0.3904 & 0.2312 & 0.1019 & 0.0411 \\
        $\alpha = 0.01$     & \textbf{0.4076} & \textbf{0.4076} & 0.0573 & 0.0058 & 0.0005 \\
    \bottomrule
    \end{tabular}
\end{table*}

%Based on the structural positions of vertices in Figure \ref{fig222} (a), we categorize the vertices into five classes.
%Let $V_1=\{2,3,6,7\}$ such that $N_{\triangle}(i) \neq \varnothing, N_{\triangle}(i)=N(i)$ for $i=2,3,6,7$.
%Let $V_2=\{1,5\}$ such that $N_{\triangle}(i) \neq \varnothing, N_{\triangle}(i)\subset N(i)$ for $i=1,5$.
%Let $V_3=\{4\}$ such that $N_{\triangle}(4) = \varnothing, V_2 \subset N(4)$.
%Let $V_4=\{8\}$ such that $N_{\triangle}(8) = \varnothing, V_3 \subset N(8)$.
%Let $V_5=\{9,10,11,12,13,14\}$ such that $N_{\triangle}(i) = \varnothing, V_3 \nsubseteq N(i)$ for $i=9,10,11,12,13,14$.
%As shown in Table \ref{table111}, under triangle centrality the vertex $4$ in $V_3$ ranks first, while vertices in $V_1$ and $V_2$ share the second rank. Vertices in $V_4$ and $V_5$ all receive zero scores. This result fails to achieve detailed distinction among vertices embedded in distinct graph structures.

As shown in Figure \ref{fig222} (b) and Table \ref{table111},
vertices $1$ through $6$ are in a triangle and other vertices not in any triangle,
the rankings of vertices exhibit significant changes under $\alpha$TEC as the value of $\alpha$ decreases from $1$ to $0.01$.
When $\alpha$ equals $1$ or $0.8$, edges exert strong influence on centrality, vertex $8$ rank first while vertices $2, 3, 6$, and $7$ consistently rank last.
As $\alpha$ gradually decreases from $0.6$ to $0.01$, the influence of triangles progressively strengthens.
Throughout this process, vertices $1$ and $5$ maintain the highest rank, whereas scores for vertices $4$ and $8$ decline slowly. In contrast, vertices $9$ through $14$ exhibit a sharp score decrease.
This occurs because vertex $4$ is influenced by edges from vertices $1$ and $5$, while vertex $8$ is affected by edges from vertex $4$.
Consequently, except when $\alpha=0.01$, vertices $1$ and $5$ consistently rank higher than vertices $2, 3, 6$, and $7$.
\subsection{Some real networks}
In the following, we compute the $\alpha$TEC of vertices in Zachary's karate club network \cite{zachary1977information}, Lusseau's Dolphin network \cite{lusseau2003bottlenose} and C.elegans metabolic network \cite{rossi2015network} for different values of $\alpha$.
The Zachary's karate club is a network with $34$ vertices and $68$ edges, primarily divided into two communities with little difference in the internal density of the two communities. The Lusseau's Dolphin is a network with $62$ vertices and $159$ edges, mainly divided into two communities, but the internal density of one community is significantly higher than the other. The C.elegans metabolic is a network with $453$ vertices and $2025$ edges, where some vertices are very closely connected, much more so than other vertices.

We plot graphs of the three networks and the distribution of centrality scores under different $\alpha$ values, respectively, as shown in Figure \ref{fig3}.
It can be observed that the centrality scores distributions demonstrate significant divergence across three networks.
And the rankings of certain vertices in three networks exhibit significant variations under different $\alpha$ values.
We subsequently conduct an in-depth analysis of centrality variations across the three networks from three perspectives.

\begin{figure}[H]
  \centering
   \subfloat[Zachary's karate club network.]{%
    \begin{minipage}[t]{0.47\textwidth}
      \centering
      \includegraphics[width=\linewidth]{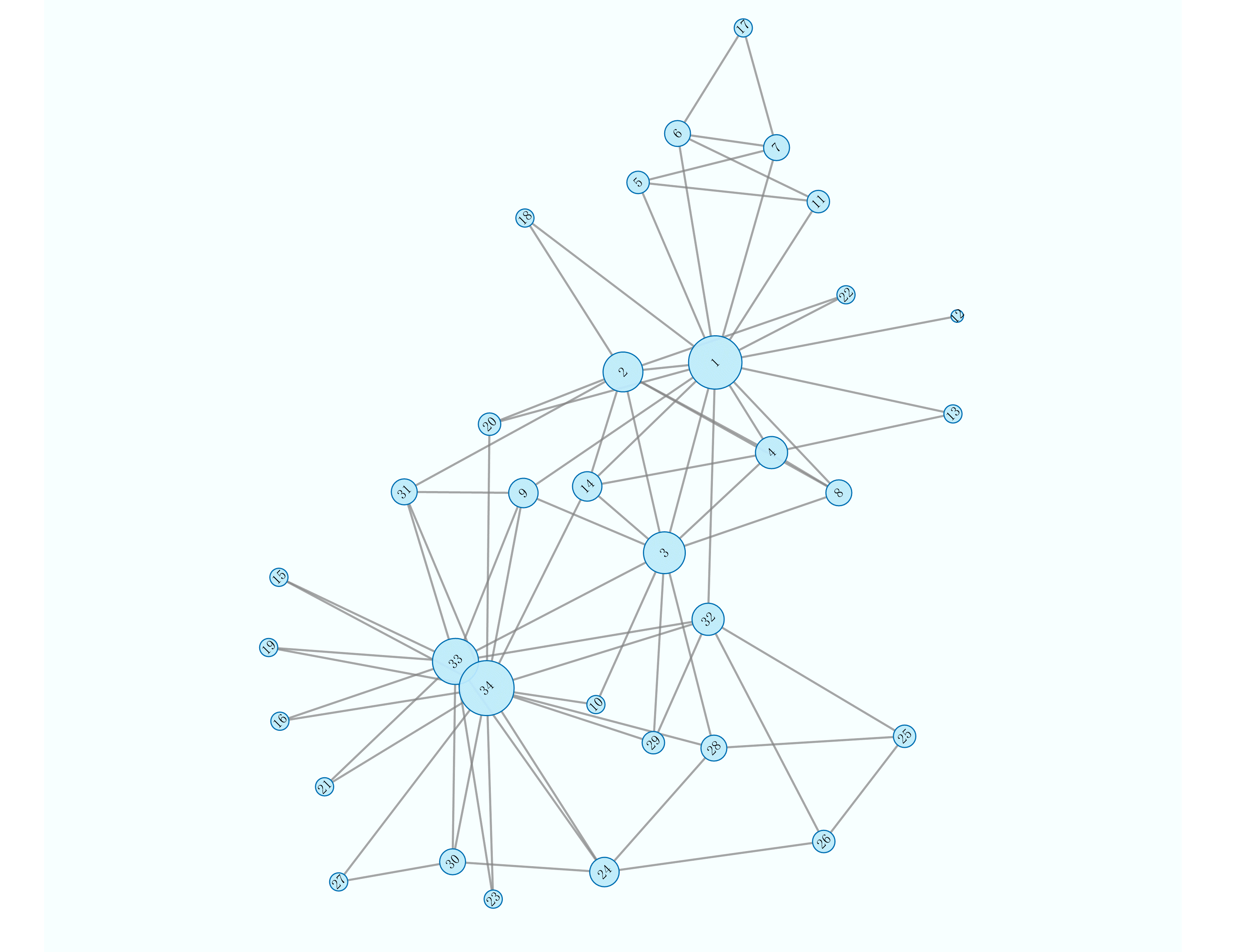}
    \end{minipage}%
  }
  \subfloat[Centrality scores of vertices under different $\alpha$ values.]{%
    \begin{minipage}[t]{0.47\textwidth}
      \centering
      \includegraphics[width=\linewidth]{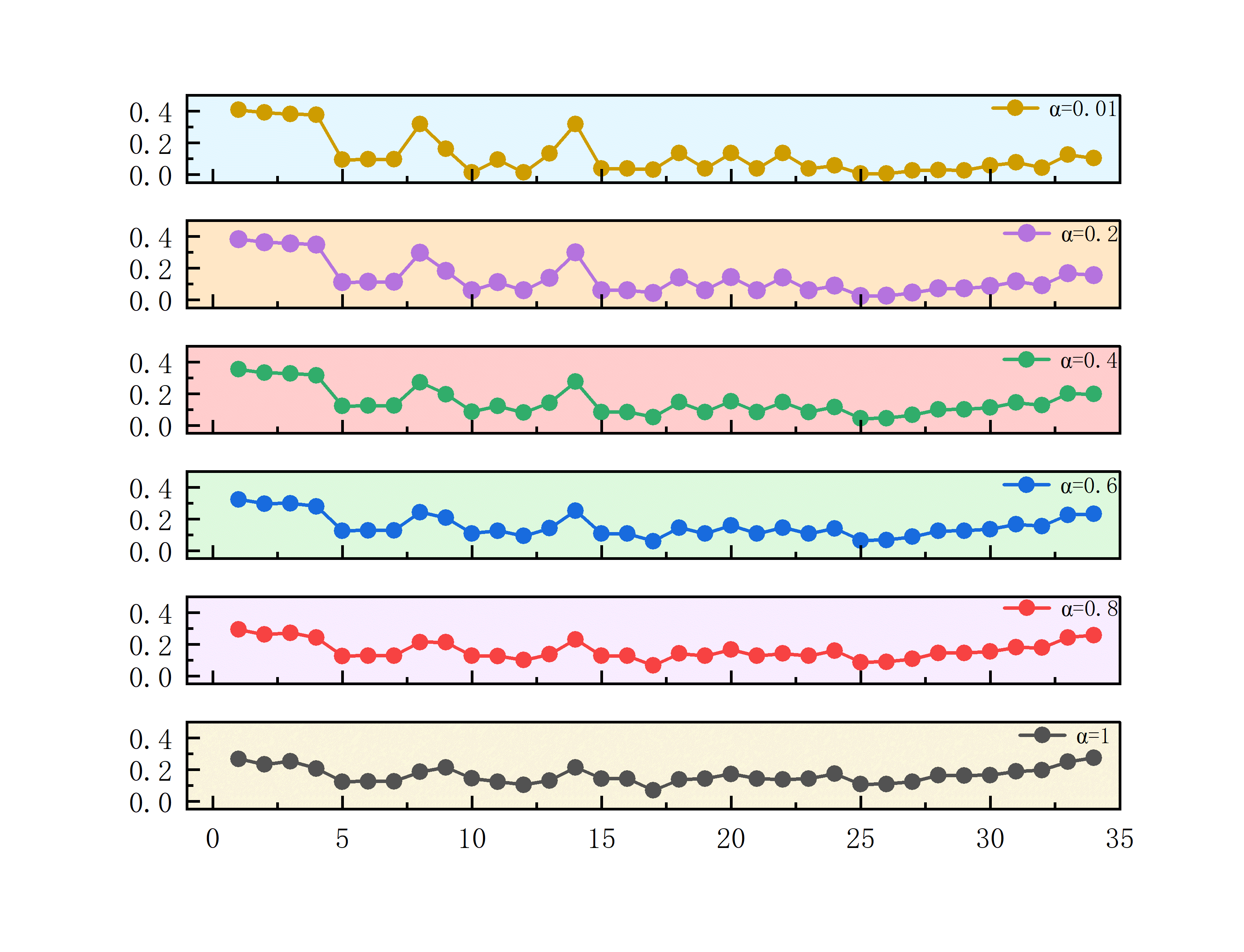}
    \end{minipage}%
  }
  \hfill
  \subfloat[Lusseau's Dolphin network.]{%
    \begin{minipage}[t]{0.47\textwidth}
      \centering
      \includegraphics[width=\linewidth]{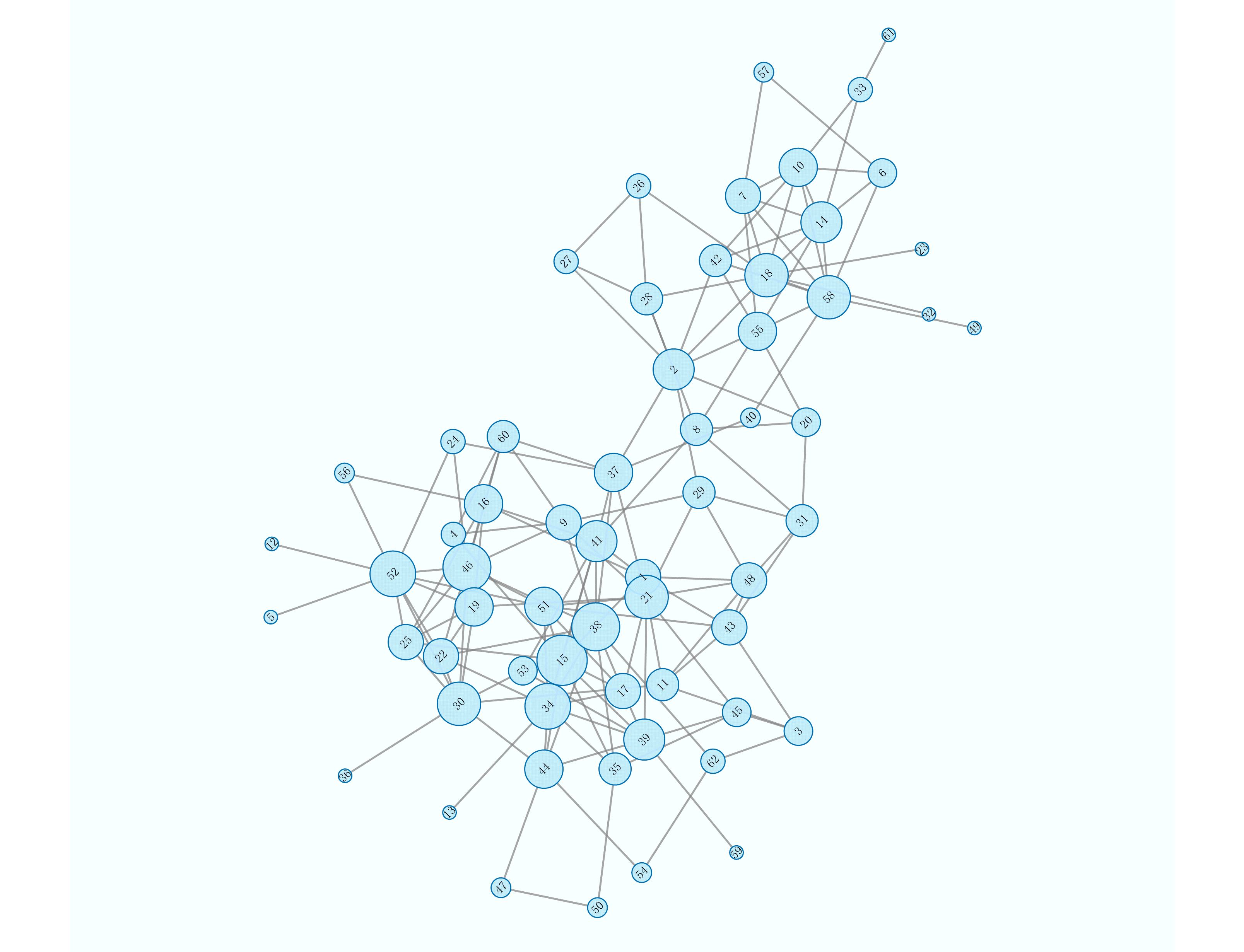}
    \end{minipage}%
  }
    \subfloat[Centrality scores of vertices under different $\alpha$ values.]{%
    \begin{minipage}[t]{0.47\textwidth}
      \centering
      \includegraphics[width=\linewidth]{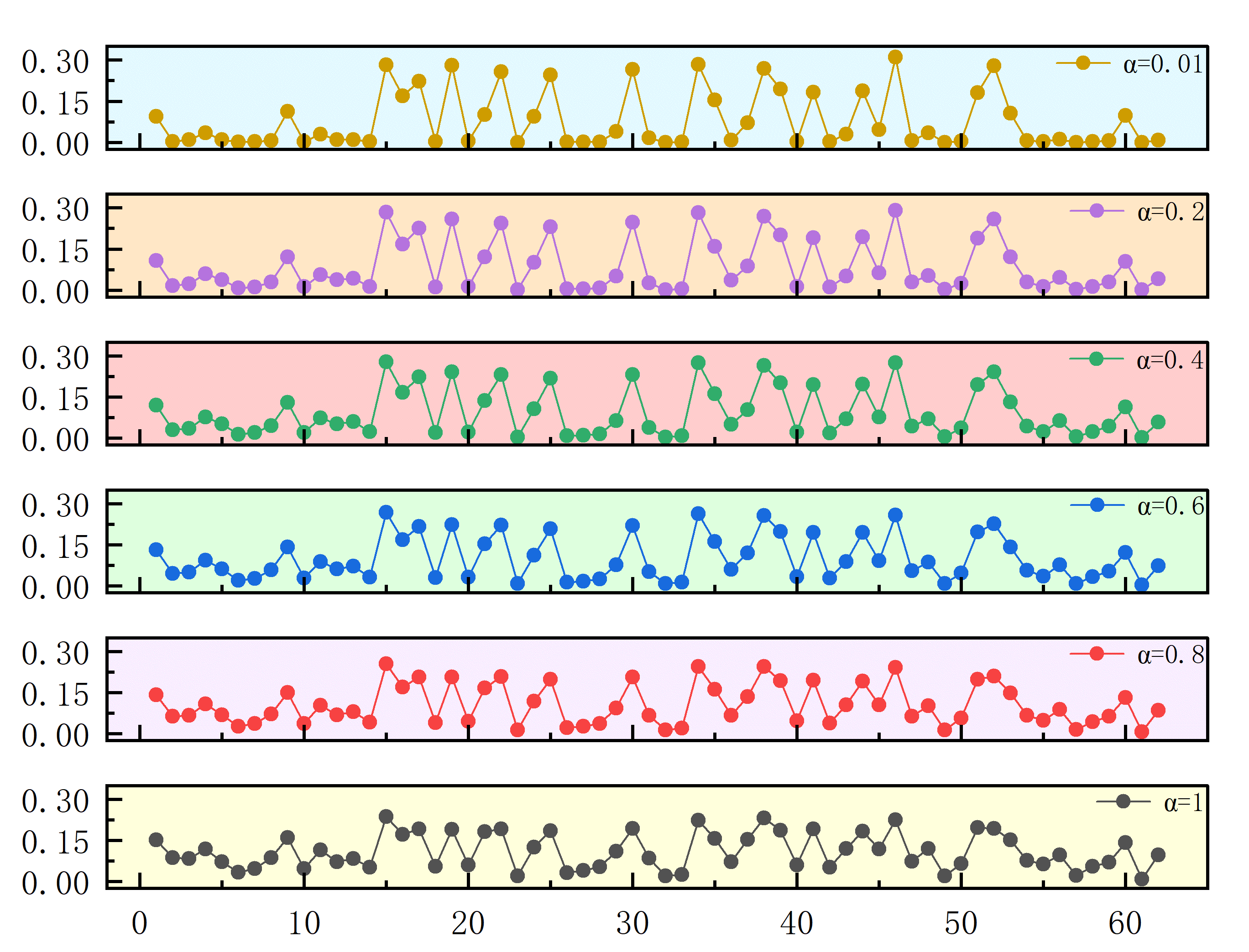}
    \end{minipage}%
  }
%  \vspace{10pt} % 调整行间距
  \hfill
  \subfloat[C.elegans metabolic network.]{%
    \begin{minipage}[t]{0.47\textwidth}
      \centering
      \includegraphics[width=\linewidth]{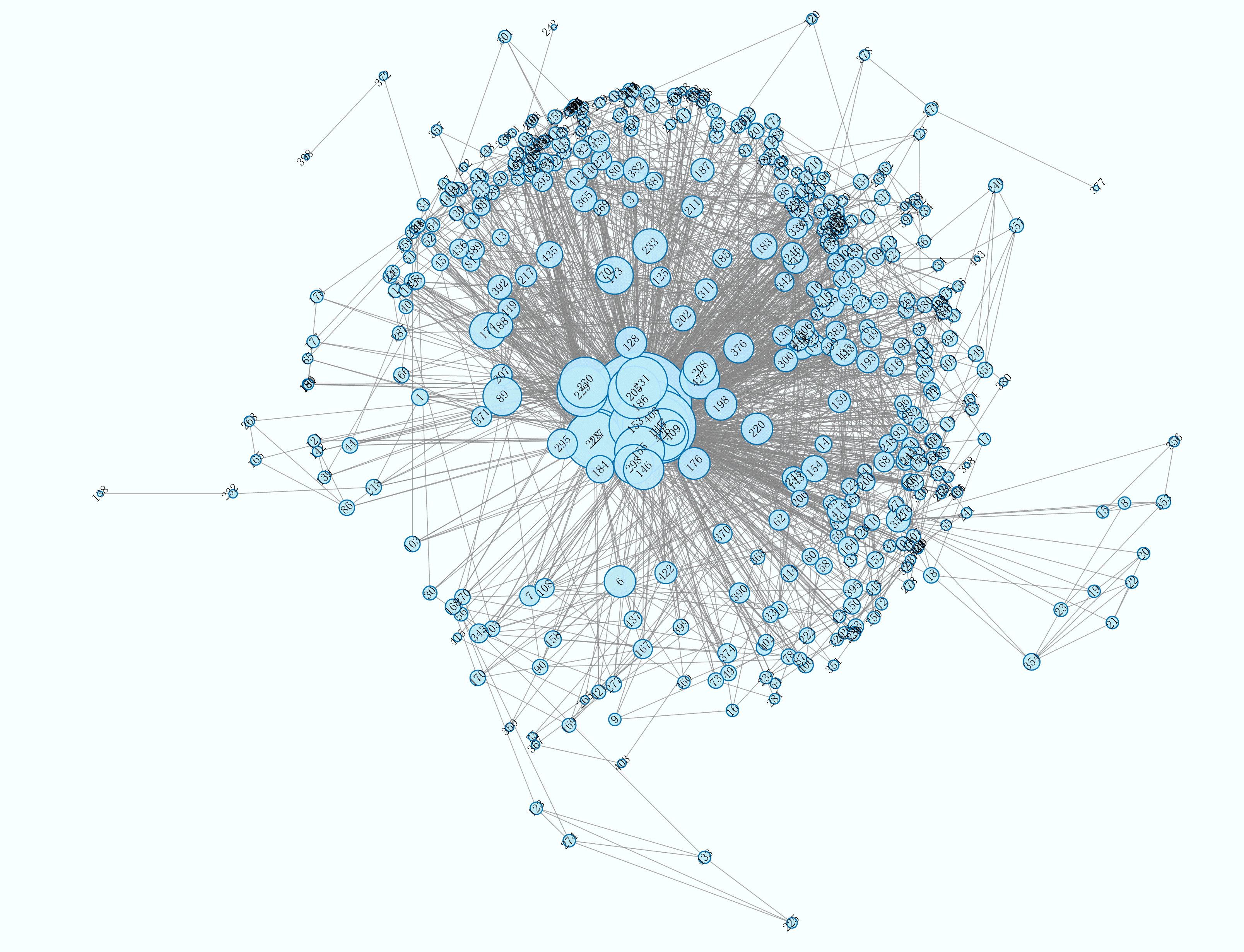}
    \end{minipage}%
  }
    \subfloat[Centrality scores of vertices under different $\alpha$ values.]{%
    \begin{minipage}[t]{0.47\textwidth}
      \centering
      \includegraphics[width=\linewidth]{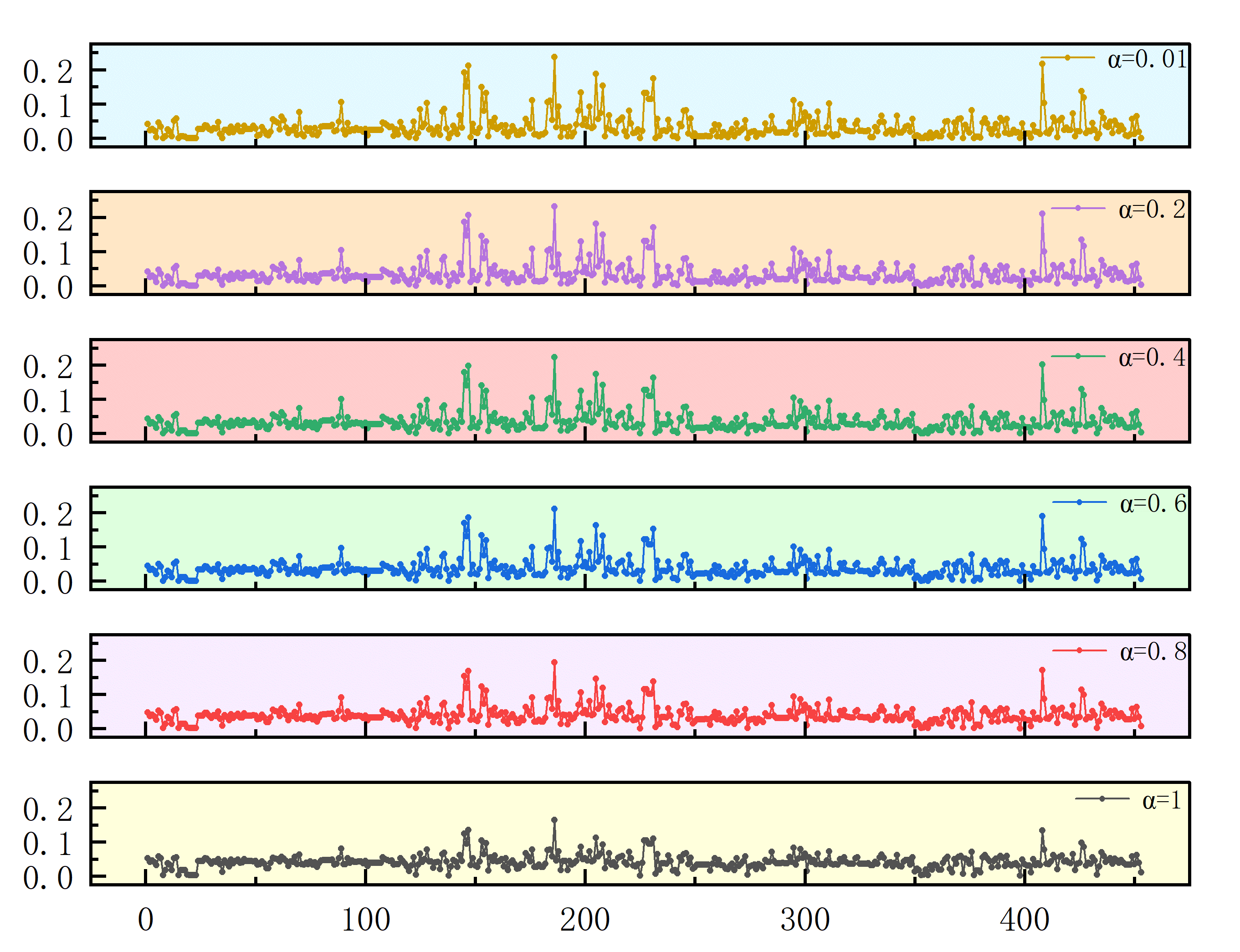}
    \end{minipage}%
  }
  \caption{Three networks and vertices's $\alpha$TEC scores distribution graphs under different values of $\alpha$.}
  \label{fig3}
\end{figure}

\subsubsection{Features of highly ranked vertices}
Below, we conduct a detailed analysis of Zachary's karate club (ZKC) network to explore the characteristics of higher-ranked vertices as the $\alpha$ value decreases from $1$ to $0.01$.
We list the top $10$ vertices in ZKC network under the $\alpha$TEC for different values of $\alpha$ in Table \ref{table1}.
%And we have conduct a statistical analysis of the number of triangles ($T(i)$) in ZKC network, the results are presented in Table \ref{table2}.
Figure \ref{fig555} illustrates all triangles of ZKC network, and the number of triangles containing vertex
$i$ is denoted as $T(i)$.
From Table \ref{table1}, it can be observed that when $\alpha=1$, vertices $34$ and $1$ rank first and second, respectively. As the value of $\alpha$ decreases, the ranking of vertex $34$ gradually drops out of the top $10$, while vertex $1$ remains consistently in the $1$st place. The rankings of other vertices, such as $9, 20,31$ and $33$, also exhibit significant fluctuations.

From Tables \ref{table1} and Figure \ref{fig555}, we find that vertices with higher rankings in ZKC network typically exhibit larger $T(i)$, except for vertices $20$ and $31$.
%This is particularly evident for vertices $1$ and $34$.
Note that, vertices $1$ and $34$ have similar $T(i)$. However, as the value of $\alpha$ decreases, there is a significant difference in their rankings.
%As the value of $\alpha$ gradually decreases, despite having similarly high triangle counts, vertex $1$ maintains a stable centrality ranking at the first position, while vertex $34$  drop out of the $10$th position.
%When $\alpha=0.4$, vertex $20$ replaces vertex $31$ as the new $10$th-ranked vertex.
Through further comparison, we find that vertices $2, 3, 4, 8, 9$, and $14$ are in triangles whin vertex $1$, all of which also have relatively higher triangle counts and higher centrality scores. In contrast, vertices $9, 33, 15, 16, 19, 21$, and $23$ are in triangles whin vertex $34$. Except for vertices $9$ and $33$, the other vertices have lower triangle counts and correspondingly lower centrality scores.
And we have $T(20)=1$, which is lower than $T(31)=3$. However, it ranks higher than vertex $31$ when $\alpha \leq 0.4$.
Because vertex $20$ is in triangles with vertex $1$ and $2$.
The same situation also occurs with vertices $12$ and $25$, where $T(12)=0$ and $T(25)=1$. However, when $\alpha \leq 0.8$, vertex $12$ has a higher score than vertex $25$. This is because vertex $12$ is influenced by the edge from vertex $1$.
%When $\alpha=1$, the ranking under $\alpha$TEC aligns with that of the classical eigenvector centrality. As the value of $\alpha$ gradually decreases, the influence of triangles on centrality increases.
%As the value of $\alpha$ decreases, the influence of triangles on centrality becomes more pronounced. Not only does $t(i)$ of a vertex affect its centrality score, but the importance of other vertices within the same triangles also plays a significant role.
So we have a vertex $i$ with a large $T(i)$ does not necessarily mean that it has a high centrality score. It is also significantly influenced by the vertices that in triangles with $i$.
Moreover, even if a vertex is not in a triangle, if it is adjacent to very important vertices, its score may be higher than those vertices that are in triangles but are not very important.
\begin{table*}[h]
    \tiny
    \centering
    \caption{Under different $\alpha$ values, the top $10$ vertices in  Zachary's karate club network.}
    \label{table1}
    \begin{tabular}{lcccccccccc}  % 调整为 11 列
    \toprule
        \textbf{Different $\alpha$ values} & \textbf{rank1} & \textbf{rank2} & \textbf{rank3} & \textbf{rank4} & \textbf{rank5}
        & \textbf{rank6} & \textbf{rank7} & \textbf{rank8} & \textbf{rank9} & \textbf{rank10}
    \\ \midrule
        $\alpha = 1$        & 34 & 1 & 3 & 33 & 2  & 9 & 14 & 4  & 32  &31 \\
        $\alpha = 0.8$      & 1 & 3 & 2 & 34 & 33 & 4 & 14 & 8  & 9  &31 \\
        $\alpha = 0.6$      & 1 & 3 & 2 & 4 & 14  & 8 & 34 & 33 & 9 & 31 \\
        $\alpha = 0.4$      & 1 & 2 & 3 & 4 & 14 & 8 & 33 & 34 & 9 & 20 \\
        $\alpha = 0.2$      & 1 & 2 & 3 & 4 & 14 & 8 & 9 & 33 & 34 & 20  \\
        $\alpha = 0.01$     & 1 & 2 & 3 & 4 & 14 & 8 & 9 & 20 & 18 & 22  \\
    \bottomrule
    \end{tabular}
\end{table*}

\begin{figure}[H]
\centerline{\includegraphics[scale=0.22]{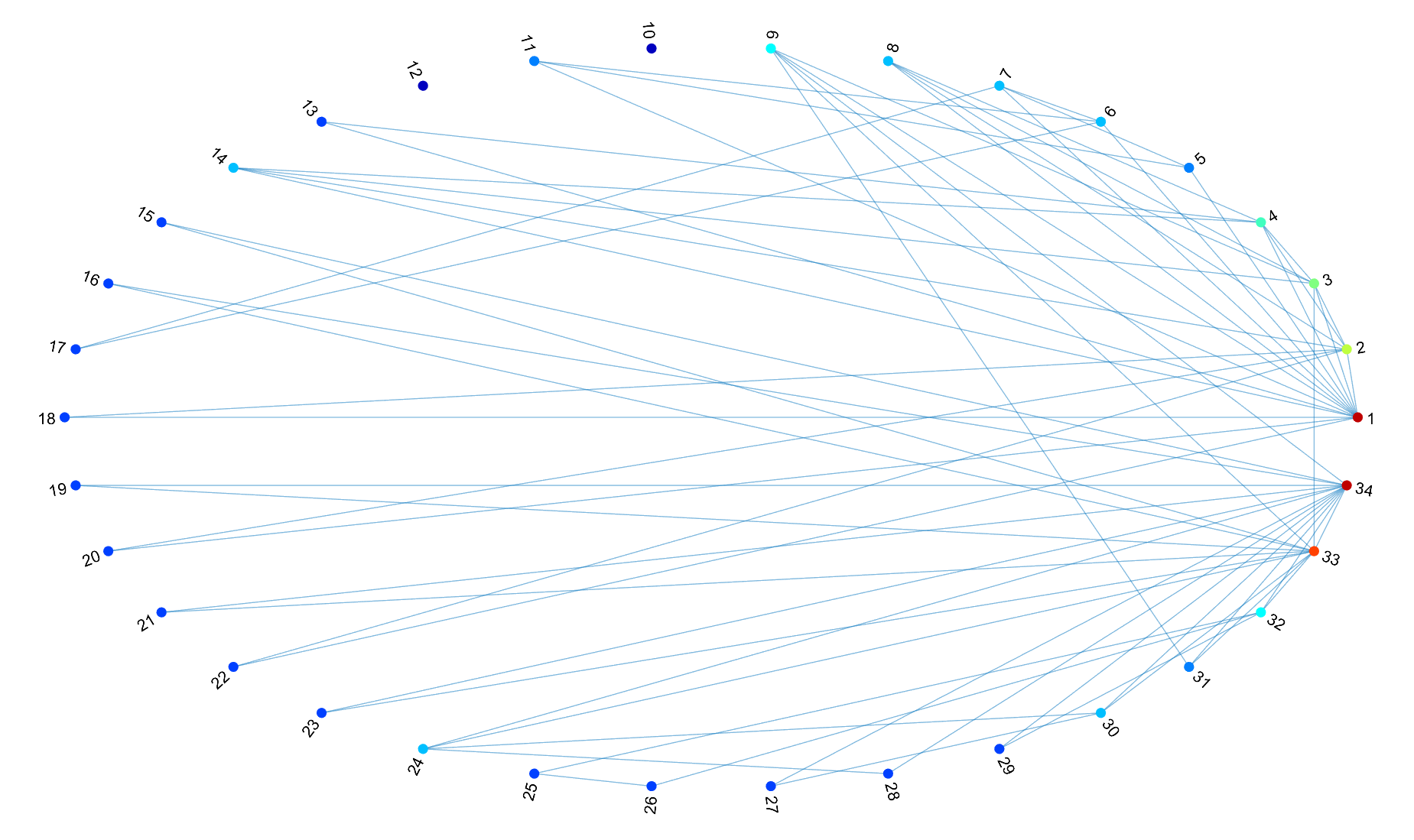}}
\caption{All triangles of Zachary's karate club network.}
\label{fig555}
\end{figure}
%\begin{table*}
%    \tiny
%    \centering
%    \caption{In Zachary's karate club network, some important vertices rank under $D(i)$ and $T(i)$.}
%    \label{table2}
%    \begin{tabular}{lccccccccc}  % 调整为 11 列
%    \toprule
%        \textbf{Mearsures} & \textbf{rank1} & \textbf{rank2} & \textbf{rank3} & \textbf{rank4} & \textbf{rank5}
%        & \textbf{rank6} & \textbf{rank7} & \textbf{rank8} & \textbf{rank9}
%    \\ \midrule
%        $D(i)$        & 34 & 1 & 33 & 3 & 2  & 4,32  & 9,14,24  \\
%        $T(i)$        & 1  & 34 &33 & 2 & 3 & 4 & 8,14 & 9 & 24,30 \\
%    \bottomrule
%    \end{tabular}
%\end{table*}

%\begin{figure}[H]
%\centerline{\includegraphics[scale=0.3]{11.png}}
%\caption{Important circles in the Zachary$^{'}$s karate club.}
%\label{fig111}
%\end{figure}

\subsubsection{Effect of neighbor triangle count on centrality}
As established in the preceding section, when $\alpha$ assumes smaller values, a vertex's $\alpha$TEC is influenced by vertices sharing triangles with it.
We denote the degree of vertex $i$, the number of triangles containing vertex $i$ and the sum of the number of triangles of the vertices adjacent to $i$ as $D(i)$, $T(i)$, and $NT(i)$, respectively. We plot the distribution graph of $D(i)$, $T(i)$, and $NT(i)$ of the Lusseau's Dolphin network, as shown in Figure \ref{fig555}.

\begin{figure}[H]
\centering
\subfloat[The distribution graph of $D(i)$, $T(i)$, and $NT(i)$.]{
\includegraphics[width=0.46\textwidth]{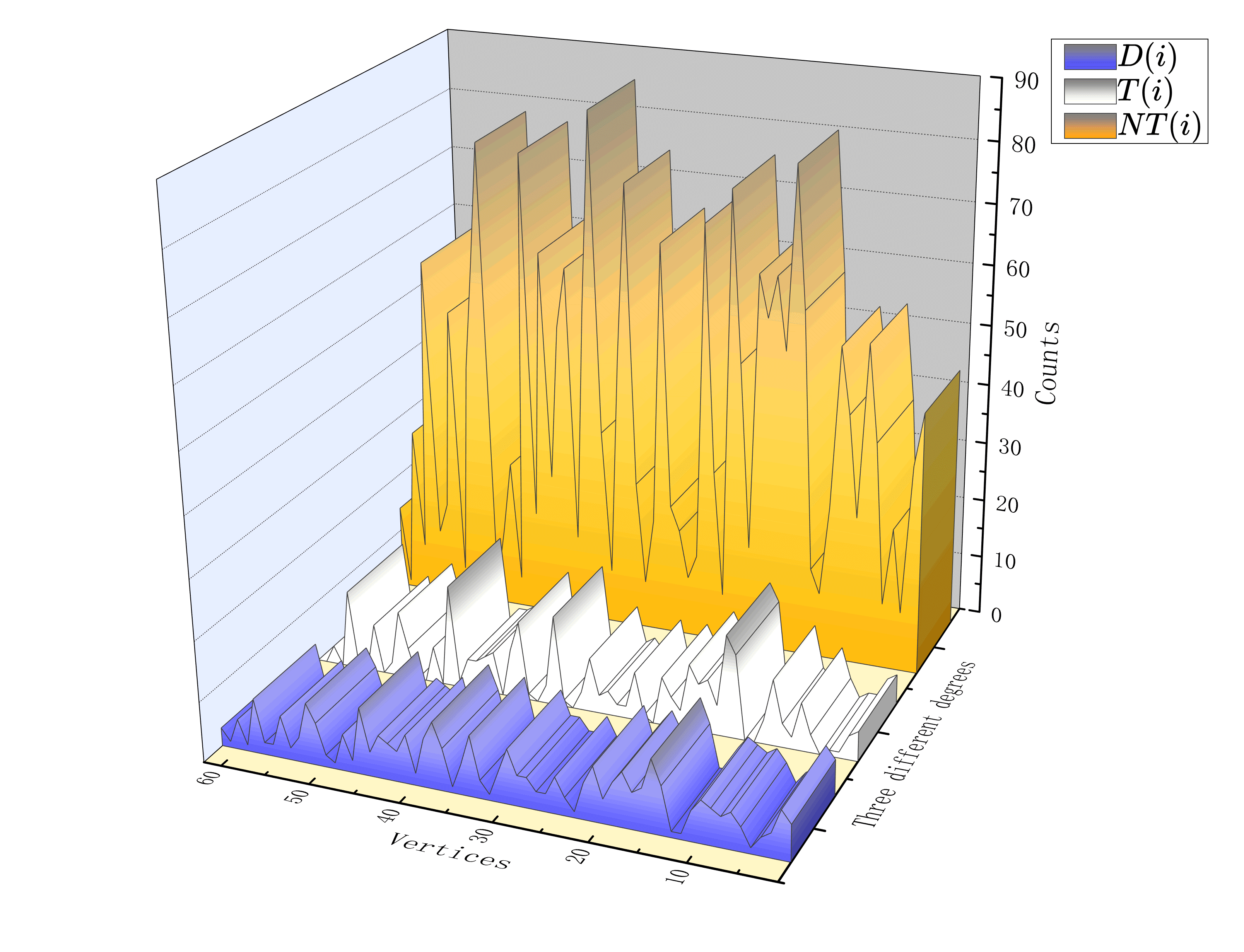}
}
\subfloat[Box plots of $D$, $T$, and $NT$.]{
\includegraphics[width=0.44\textwidth]{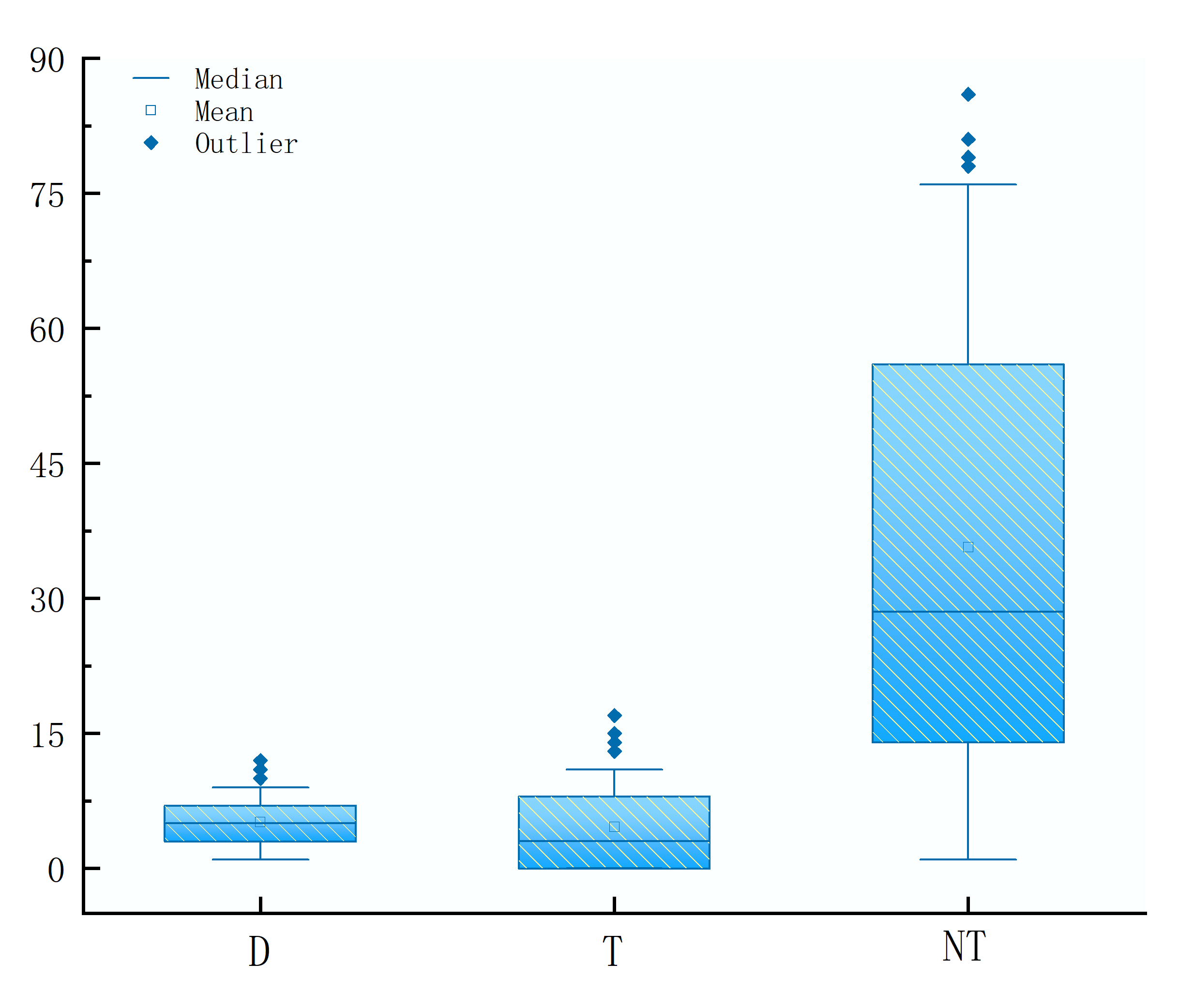}
}
\caption{The distributions of $D(i)$, $T(i)$, and $NT(i)$ in the Lusseau's Dolphin network.}
\label{fig555}
\end{figure}

From Figure \ref{fig555}, we have that the differences in $D(i)$ are not significant, but the differences in $T(i)$ and $NT(i)$ have increased significantly, especially $NT(i)$.
This is also why in Figure \ref{fig3} (d), when $\alpha$ is small, the overall scores are lower, with only a few vertices having higher scores.
Because triangles have a significant impact on centrality, a few vertices have higher $T(i)$ and $NT(i)$ than others.
When $\alpha=1$, the edges have a greater influence on centrality, at this time although some individual vertices have lower scores, the overall difference is not significant.

\subsubsection{Relationship with other centrality measures}
Next, we conduct a correlation analysis between the scores of $\alpha$TEC under four different $\alpha$ values and four types of centrality measures (DC, TC, BC and SC) in the C. elegans metabolic network, as shown in Figure \ref{fig3777}.

\begin{figure}[H]
\centerline{\includegraphics[scale=0.35]{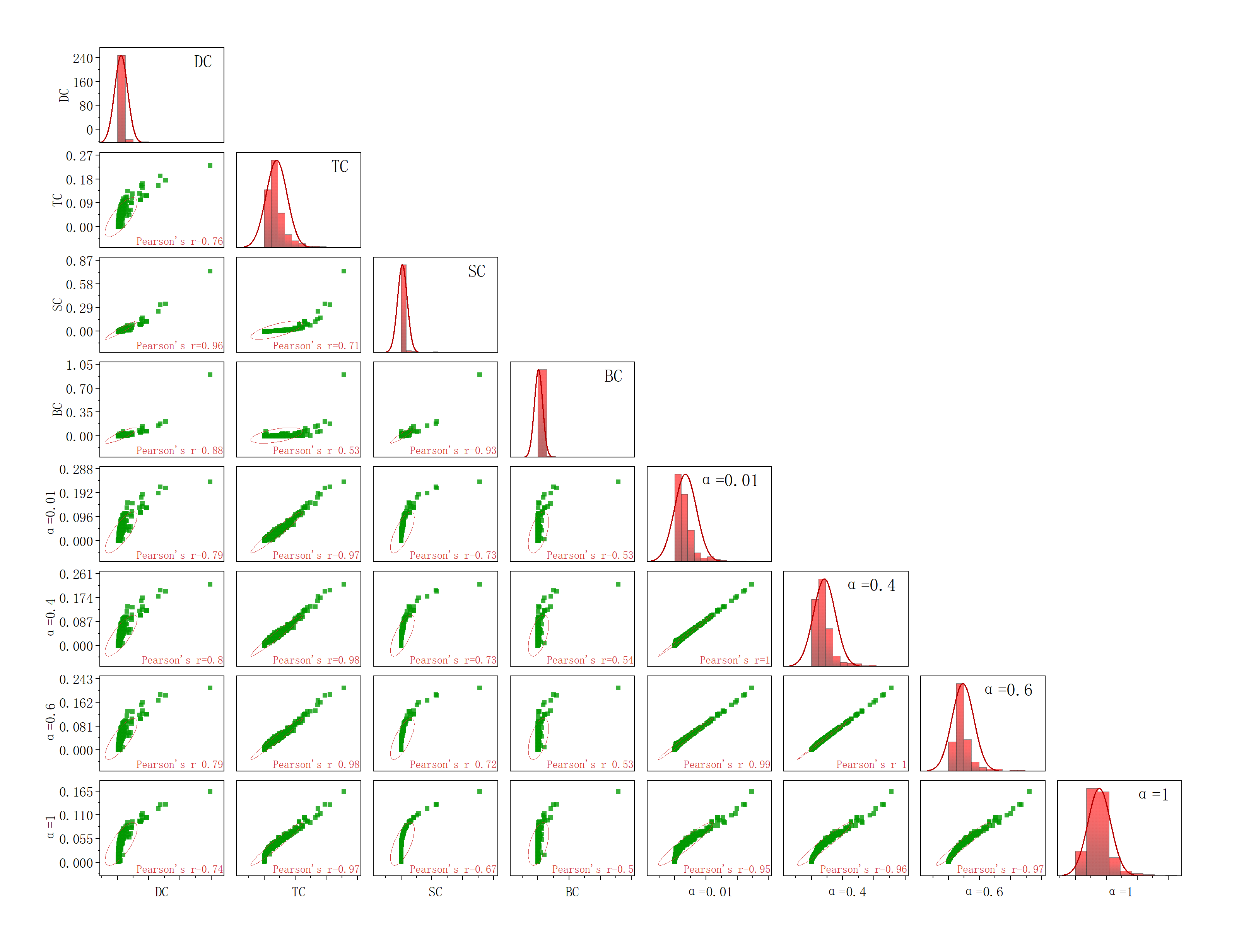}}
\caption{The distribution of vertices's centrality scores in the C. elegans metabolic network under several centrality measures.}
\label{fig3777}
\end{figure}

It can be seen that, except for TC, the other three types of centrality  measures have relatively low correlation with $\alpha$TEC.
Due to the fact that a few vertices in the network have very close links while the majority of vertices have relatively sparse connections, the centrality scores under DC, BC, and SC show significant differences. Under BC and SC, many vertices have a score of $0$.
The centrality measures that do not show such a large gap in the distribution of vertex scores are all centralities that take into account the influence of neighboring points.

When $\alpha=1$, the centrality ranking is consistent with the EC, which takes into account the importance of a vertex's neighbors in determining its centrality. TC considers not only the number of triangles a vertex is in but also the number of triangles its neighbors are in.
Compare to the aforementioned two centralities, the $\alpha$TEC proposed in this paper simultaneously considers the influence of both edges and triangles on centrality, and adjusts the proportion of these influences by tuning $\alpha$. This approach offers greater flexibility and ensures that important information is not lost.

\subsection{Connectivity}
Building upon the proposed $\alpha$-triangle eigenvector centrality ($\alpha$TEC),
we formally define the importance score of each triangle.

\begin{defi}
For a connected graph $G$, let $\boldsymbol {x}$ be the $\alpha$TEC of $G$.
Let $\triangle_i$ be a triangle in $G$.
The importance score of $\triangle_i$ is
\begin{align*}
I_{\triangle_i}= \sum_{p,q,r} x_p+x_q+x_r,
\end{align*}
where $p,q,r$ are vertices of $\triangle_i$, $x_p$ is the $p$th of $\boldsymbol {x}$.
\end{defi}

The critical triangles identified by $\alpha$TEC play a pivotal role in enhancing network Connectivity.
Fiedler vector is the unit eigenvector corresponding to the second smallest Laplacian eigenvalue \cite{fiedler1975property}.
This vector plays a pivotal role in graph-theoretic applications, including community detection in networks \cite{chen2015deep} and analyze the connectivity structure of networks \cite{chen2014local}.
In \cite{jiang2023searching}, the authors proposed a new cycle ranking index $I_{c_i}= \sum_{ (p,q) \in E_i } (x_p - x_q)^2 $ to measure the importance of a cycle in the network,
where $c_i$ is a circle, $E_i$ is the set of all edges of cycle $c_i$ and $x_p$ is the $p$th of the Fiedler vector.
When $c_i$ is a triangle, they give a ranking of triangles under $I_{c_i}$ in the C. elegans  metabolic network.
In the following, we rank triangles under $I_{\triangle_i}$ in the C. elegans metabolic network and compare our results with them.
%\begin{defi}\cite{jiang2023searching}
%For a connected graph $G=(V(G),E(G))$, let ${\boldsymbol {x}}_2(G)$ be the Fiedler vector of $G$.
%Let $c_i$ be a circle in $G$, and let $E_i$ be the set of all edges of cycle $c_i$.
%The importance score of  $c_i$ is
%\begin{align*}
%I_{c_i}= \sum_{ (p,q) \in E_i } (x_p - x_q)^2 ,
%\end{align*}
%where $p,q \in V(G)$, $x_p$ is the $p$th of ${\boldsymbol {x}}_2(G)$.
%\end{defi}

\begin{table*}[h]
    \tiny
    \centering
    \caption{The ranking of triangles in the C. elegans  metabolic network under $I_{c_i}$ and $I_{\triangle_i}$.}
    \label{table3}
    \begin{tabular}{lccccc}  % 6 列
    \toprule
        \textbf{Triangle} & \textbf{$I_{c_i}$} & \textbf{Rank} & \textbf{Triangle} & \textbf{$I_{\triangle_i}$} & \textbf{Rank}
    \\ \midrule
       $c_1=[56,153,217]$  & 0.1536  &1   & $\triangle_1=[147,186,408]$   & 0.0418  & 1  \\
       $c_2=[56,123,274]$  & 0.0506  &2   & $\triangle_2=[145,186,408]$   & 0.0405  & 2  \\
       $c_3=[56,274,433]$  & 0.0506  &2   & $\triangle_3=[145,147,186]$   & 0.0402  & 3  \\
       $c_4=[56,123,433]$  & 0.0506  &2   & $\triangle_4=[186,205,408]$   & 0.0402  & 3  \\
       $c_5=[149,154,352]$ & 0.0040  &5   & $\triangle_5=[147,186,205]$   & 0.0399  & 5  \\
        \\ \midrule
        \vdots & \vdots & \vdots & \vdots & \vdots & \vdots \\
         \\ \midrule
       $c_{3280}=[251,254,255]$  & 0  &3284   & $\triangle_{3280}=[8,15,353]$      & 0  & 3284  \\
       $c_{3281}=[51,72,359]$    & 0  &3284   & $\triangle_{3281}=[123,225,274]$   & 0  & 3284  \\
       $c_{3282}=[51,359,447]$   & 0  &3284   & $\triangle_{3282}=[123,225,433]$   & 0  & 3284  \\
       $c_{3283}=[194,219,260]$  & 0  &3284   & $\triangle_{3283}=[225,274,433]$   & 0  & 3284  \\
       $c_{3284}=[252,253,255]$  & 0  &3284   & $\triangle_{3284}=[6,159,169]$     & 0  & 3284  \\
    \bottomrule
    \end{tabular}
\end{table*}

\begin{figure}[H]
\centerline{\includegraphics[scale=0.2]{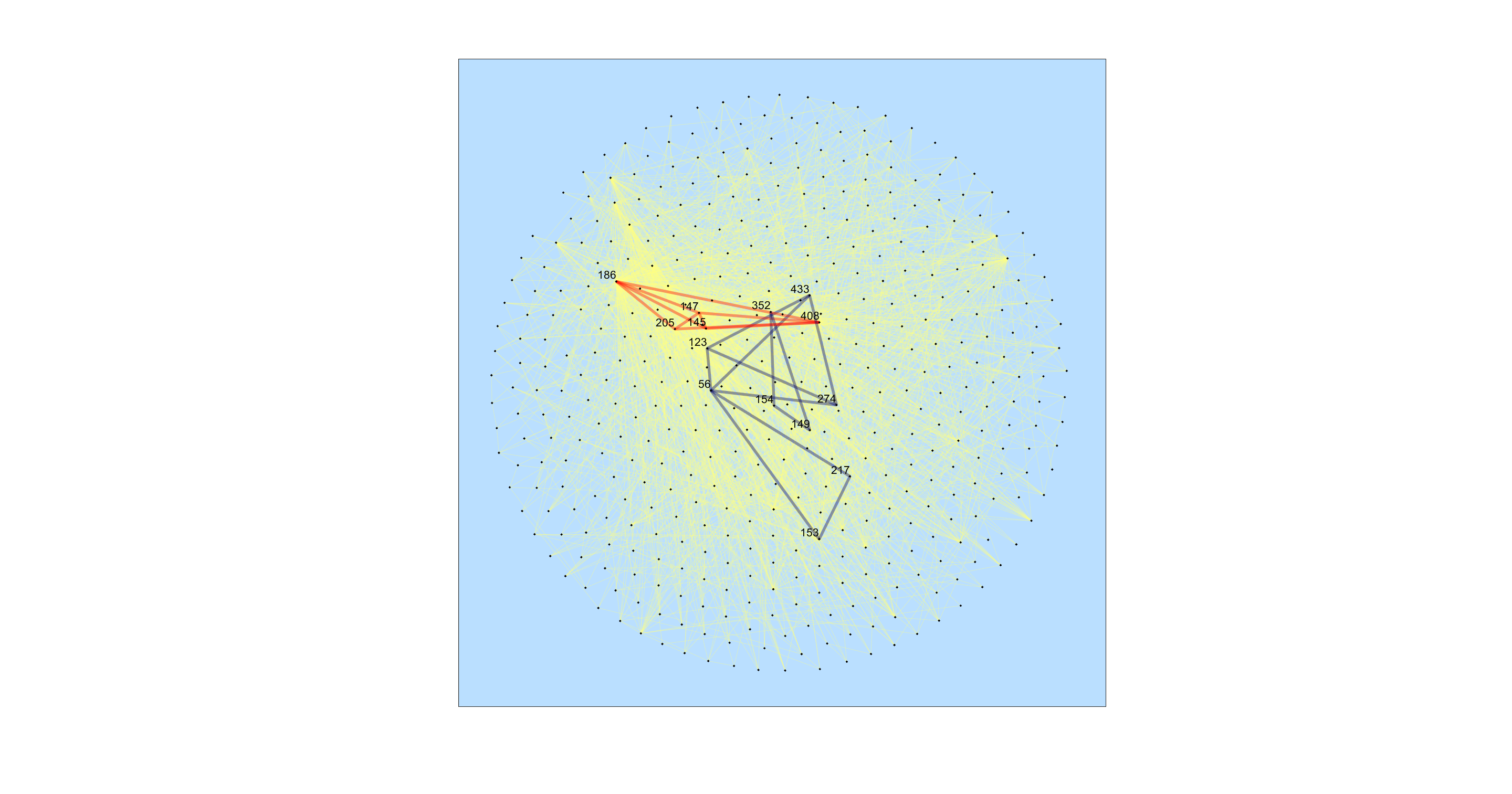}}
\caption{Red(Blue) triangles are important triangles in the C.elegans metabolic network under $I_{\triangle_i}$( $I_{c_i}$).}
\label{fig5}
\end{figure}

Triangles with high centrality index $I_{c_i}$ rankings are typically in bridge positions between communities.
Their removal often increases the number of connected components in networks.
We remove the vertices of $c_1,c_2$ and $c_3$ from the network, respectively. And observe that the originally connected network splits into $3$, $2$, and $2$ components.
In contrast, when we delete the vertices of $\triangle_1,\triangle_2$ and $\triangle_3$, respectively. The network splits into $6$ connected components in all cases.
Notably, vertices $123, 274$, and $433$ exhibit a striking divergence across two measures: they appear in the five most critical triangles under $I_{c_i}$, yet  appear in the five least critical triangles under $I_{\triangle_i}$.
Through comparison, it is evident that the triangles rank as important by the $I_{\triangle_i}$ sorting have a more significant impact on the connectivity of networks.

\section{Conclusion}
The $\alpha$TEC proposed in this paper dynamically adjusts the influence of edges and triangles on centrality by the parameter $\alpha$. Under this centrality measure, each vertex is assigned a nonzero centrality score,
effectively identifies a vertex's structural positioning within the graph, thereby addressing limitations of existing centrality methods.
We prove that for any connected graph, the $\alpha$TEC scores of vertices necessarily exist and are unique.
Furthermore, we provide conditions under which all vertices in a regular graph share identical $\alpha$TEC scores.
Numerical experiments demonstrate that the $\alpha$TEC ranking dynamics of vertices as the $\alpha$ value varies reflect their structural positioning within the graph.
As the value of $\alpha$ varies, the centrality rankings align with the changing of edge structures (stronger influence as $\alpha$ increases) and triangle structures (stronger influence as $\alpha$ decreases).

Additionally, we validate that vertices with higher $\alpha$TEC rankings have significantly influence on network connectivity. In connected graphs, removing triangles ranked highly under $\alpha$TEC splits the graph into $6$ connected components, whereas removing top-ranked triangles identified by another method results in only $3$ or $2$ connected components.

\section*{Acknowledgments}
This work is supported by the National Natural Science Foundation of China (No. 12071097, 12371344), the Natural Science Foundation for The Excellent Youth Scholars of the Heilongjiang Province (No. YQ2022A002) and the Fundamental Research Funds for the Central Universities.

\section*{References}%\marginpar{\tiny need to be polished}
\bibliographystyle{plain}
\bibliography{ml0ht2}
\end{spacing}
\end{document}